\def\re{\mathbb{R}}
\newtheorem{thm}{Proposition}
\newtheorem{dfn}{Definition}
\newtheorem{lem}{Lemma}
\begin{document}

\title{Missing observation analysis for matrix-variate time series data}

\author{K. Triantafyllopoulos\footnote{Department of Probability and Statistics, University of Sheffield, Sheffield S3 7RH,
UK, email: {\tt k.triantafyllopoulos@sheffield.ac.uk}}}

\date{\today}

\maketitle

\begin{abstract}

Bayesian inference is developed for matrix-variate dynamic linear
models (MV-DLMs), in order to allow missing observation analysis,
of any sub-vector or sub-matrix of the observation time series
matrix. We propose modifications of the inverted Wishart and
matrix $t$ distributions, replacing the scalar degrees of freedom
by a diagonal matrix of degrees of freedom. The MV-DLM is then
re-defined and modifications of the updating algorithm for missing
observations are suggested.

\textit{Some key words:} Bayesian forecasting, dynamic models,
inverted Wishart distribution, state space models.
\end{abstract}

\section{Introduction}

Suppose that, in the notation of West and Harrison (1997, Chapter
16), the $p\times r$ matrix-variate time series $\{y_t\}$ follows
a matrix-variate dynamic linear model (MV-DLM) so that
\begin{equation}\label{model}
y_t'=F_t'\Theta_t+\epsilon_t' \quad \textrm{and} \quad \Theta_t =
G_t \Theta_{t-1}+\omega_t,
\end{equation}
where $F_t$ is a $d\times r$ design matrix, $G_t$ is a $d\times d$
evolution matrix and $\Theta_t$ is a $d\times p$ state matrix.
Conditional on a $p\times p$ covariance matrix $\Sigma$, the
innovations $\epsilon_t$ and $\omega_t$ follow, respectively,
matrix-variate normal distributions (Dawid, 1981), i.e.
$$
\epsilon_t|\Sigma\sim N_{r\times p} (0,V_t,\Sigma)\quad \textrm{and}
\quad \omega_t|\Sigma\sim N_{d\times p} (0,W_t,\Sigma).
$$
This is equivalent to writing $\textrm{vec}(\epsilon_t)|\Sigma\sim
N_{rp}(0,\Sigma\otimes V_t)$ and
$\textrm{vec}(\omega_t)|\Sigma\sim N_{dp}(0,\Sigma\otimes W_t)$,
where $\textrm{vec}(.)$ denotes the column stacking operator of a
matrix, $\otimes$ denotes the Kronecker product of two matrices
and $N_{rp}(.,.)$ denotes the multivariate normal distribution.

We assume that the innovation series $\{\epsilon_t\}$ and
$\{\omega_t\}$ are internally and mutually uncorrelated and also
they are uncorrelated with the assumed initial priors
\begin{equation}\label{priors1}
\Theta_0|\Sigma\sim N_{d\times
p}(m_0,P_0,\Sigma)\quad\textrm{and}\quad \Sigma \sim IW_p
(n_0,n_0S_0),
\end{equation}
for some known $m_0$, $P_0$, $n_0$ and $S_0$. Here $\Sigma\sim
IW_p(n_0,n_0S_0)$ denotes the inverted Wishart distribution with
$n_0$ degrees of freedom and parameter matrix $n_0S_0$. The
covariance matrices $V_t$ and $W_t$ are assumed known; usually
$V_t=I_r$ (the $r\times r$ identity matrix) and $W_t$ can be
specified using discount factors as in West and Harrison (1997,
Chapter 6). Alternatively, $W_t=W$ may be considered
time-invariant and it can be estimated from the data using the EM
algorithm (Dempster {\it et al.}, 1977; Shumway and Stoffer,
1982). With the above initial priors (\ref{priors1}) the posterior
distribution of $\Theta_t|\Sigma,y_1,\ldots,y_t$ is a
matrix-variate normal distribution and the posterior distribution
of $\Sigma|y_1,\ldots,y_t$ is an inverted Wishart distribution
with degrees of freedom $n_t=n_{t-1}+1$ and a parameter matrix
$n_tS_t$, which are calculated recurrently (West and Harrison,
1997, Chapter 16).

Missing data in time series are typically handled by evaluating
the likelihood function (Jones, 1980; Ljung, 1982; Shumway and
Stoffer, 1982; Harvey and Pierse, 1984; Wincek and Reinsel, 1984;
Kohn and Ansley, 1986; Ljung, 1993; G\'{o}mez and Maravall, 1994;
Luce\~{n}o, 1994; Luce\~{n}o, 1997). In the context of model
(\ref{model}) a major obstacle in inference is when a sub-vector
or sub-matrix $\widetilde{y}_t$ of $y_t$ is missing at time $t$.
Then the scalar degrees of freedom of the inverted Wishart
distribution of $\Sigma|y_1,\ldots,y_t$, are incapable to include
the information of the observed part of $y_t$, but to exclude the
influence of the missing part $\widetilde{y}_t$. For example
consider $p=2$ and $r=1$ or $y_t=[y_{1t}~y_{2t}]'$ and suppose
that at time $t$, $y_{1t}$ is missing ($\widetilde{y}_t=y_{1t}$),
while $y_{2t}$ is observed. Let $n_{t-1}$ denote the degrees of
freedom of the inverted Wishart distribution of
$\Sigma|y_1,\ldots,y_{t-1}$. One question is how one should update
$n_t$, since the information at time $t$ is partial (one component
observed and one missing). Likewise, given this partial
information at time $t$, another question is how to estimate the
off-diagonal elements of $\Sigma$, which leads to the estimation
of the covariance of $y_{1t}$ and $y_{2t}$.

In this paper, introducing several degrees of freedom that form a
diagonal matrix, we propose modifications to the inverted Wishart
and matrix $t$ distributions. We prove the conjugacy between these
distributions and we discuss modifications in the recursions of the
posterior moments in the presence of missing data. This approach
does not require to order all missing observations in one matrix
(Shumway and Stoffer, 1982; Luce\~{n}o, 1997) and therefore it can
be applied for sequential purposes as new data are observed.

\section{Matrix-variate dynamic linear models}\label{mvdlm}

\subsection{Modified inverted Wishart distribution}\label{c4s2}

Suppose that $\Sigma$ is a $p\times p$ random covariance matrix,
$S,R$ are $p\times p$ covariance matrices and $N$ is a $p\times p$
diagonal matrix with positive diagonal elements. Let tr$(.)$,
etr$(.)$ and $|.|$ denote the trace, the exponent of the trace and
the determinant of a square matrix, respectively. The density of
the inverted Wishart distribution is given by
\begin{equation}\label{eq00}
p(\Sigma)=c|R|^{(k-p-1)/2} |\Sigma|^{-k/2}\textrm{etr}\left(
-\frac{1}{2} R\Sigma^{-1} \right),
\end{equation}
from which it is deduced that
\begin{equation}\label{eq0}
\int_{\Omega} |\Sigma|^{-k/2}\textrm{etr}\left( -\frac{1}{2}
R\Sigma^{-1} \right)\,d\Sigma=c^{-1}|R|^{-(k-p-1)/2},
\end{equation}
with $\Omega=\{\Sigma\in \re^{p\times p}:\Sigma>0\}$,
$c^{-1}=2^{(k-p-1)p/2} \Gamma_p \{(k-p-1)/2\}$, and $k>2p$, where
$\Gamma_p(.)$ is the multivariate gamma function.

\begin{lem}
The function
\begin{equation}\label{eq1}
p(\Sigma)=c |\Sigma |^{-\left\{ v+ \textrm{tr}
\left(N\right)/(2p)\right\} } \textrm{etr} \left( -\frac{1}{2}
N^{1/2} SN^{1/2}\Sigma^{-1}  \right),
\end{equation}
where $c$ does not depend on $\Sigma$, is a density function.
\end{lem}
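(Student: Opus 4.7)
The strategy is to recognize that (\ref{eq1}) has exactly the functional form appearing inside the integral (\ref{eq0}), once we perform a correct identification of the parameters $(k,R)$. So my plan is simply to match constants, invoke (\ref{eq0}) to get a finite value for the integral, and then read off the normalizing constant $c$.

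Concretely, I would set
$$
k \;=\; 2v + \frac{\textrm{tr}(N)}{p}, \qquad R \;=\; N^{1/2} S N^{1/2},
$$
so that $|\Sigma|^{-\{v+\textrm{tr}(N)/(2p)\}} = |\Sigma|^{-k/2}$ and the exponent of the trace matches (\ref{eq0}). Since $N$ is diagonal with strictly positive entries, $N^{1/2}$ is well defined and invertible; together with $S>0$ this gives $R = N^{1/2} S N^{1/2} > 0$, so $R$ is a valid covariance matrix and $|R|>0$. Applying (\ref{eq0}) then yields
$$
\int_{\Omega} |\Sigma|^{-\{v+\textrm{tr}(N)/(2p)\}} \textrm{etr}\!\left(-\tfrac{1}{2} N^{1/2} S N^{1/2} \Sigma^{-1}\right) d\Sigma
\;=\; 2^{(k-p-1)p/2}\, \Gamma_p\!\left(\tfrac{k-p-1}{2}\right) |R|^{-(k-p-1)/2},
$$
so the required normalizing constant is the reciprocal of the right-hand side, which is positive and finite. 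Hence $p(\Sigma)$ integrates to $1$ and is non-negative on $\Omega$, so it is a density.

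The only delicate step is ensuring the hypotheses for (\ref{eq0}) hold, namely $k>2p$. Translating this back, we need $v + \textrm{tr}(N)/(2p) > p$, a regularity condition on $v$ and $N$ that is implicit in the setup and should simply be recorded. Verifying $R>0$ and tracking this degrees-of-freedom condition is the substance of the argument; everything else is a direct substitution into (\ref{eq0}).
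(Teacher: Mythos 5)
Your proposal is correct and follows exactly the paper's own argument: the proof there applies the same bijective substitution $R=N^{1/2}SN^{1/2}$, $k=2v+\textrm{tr}(N)/p$ and reads the result off from the inverted Wishart density (\ref{eq00}) and the integral (\ref{eq0}). Your additional remarks verifying $R>0$ and recording the implicit condition $k>2p$ (i.e.\ $v+\textrm{tr}(N)/(2p)>p$) are a sensible tightening but do not change the route.
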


\begin{proof}
If the following bijective transformation is applied
\begin{equation}\label{tr1}
R=N^{1/2}SN^{1/2}\quad \mbox{and} \quad
k=2v+\frac{\textrm{tr}(N)}{p},
\end{equation}
then (\ref{eq1}) is directly obtained from (\ref{eq00}).
\end{proof}

From the above bijection and the Wishart integral, we can see that
the normalizing constant $c$ is
\begin{equation*} c=c_0|S|^{\left\{
2v+\textrm{tr}\left(N\right)/p-p-1\right\}/2} \left( \prod_{j=1}^p
n_j\right)^{\left\{2v+\textrm{tr}\left(N\right)/p-p-1\right\}/2},
\end{equation*}
where
\begin{equation*}
c_0^{-1}=2^{\left\{2v+\textrm{tr}\left(N\right)/p-p-1\right\}p/2}
\Gamma_p \left\{
\frac{2v+\textrm{tr}\left(N\right)/p-p-1}{2}\right\},
\end{equation*}
for $N=\textrm{diag}(n_1,\ldots ,n_p)$ and $n_i>0$
$(i=1,\ldots,p)$.

Density (\ref{eq1}) proposes a modification of the inverted
Wishart distribution in order to incorporate a diagonal matrix of
degrees of freedom. The modification consists of a bijective
transform of the two distributions. We will then say that $\Sigma$
follows the {\it modified inverted Wishart} distribution and we
will write $\Sigma\sim MIW_p(S,N,v)$, where $v$ is a scalar
hyperparameter. Note that when $n_1=\cdots =n_p=n$ and $v=p$, the
above distribution reduces to an inverted Wishart distribution
with $n$ degrees of freedom.

With $k$ and $R$ as defined in equation (\ref{tr1}), the mean of
$\Sigma$ is
\begin{equation*}
E(\Sigma)=\frac{R}{k-2p-2}=\left\{\frac{\textrm{tr}\left(N\right)}{p}+2v-2p-2
\right\}^{-1}N^{1/2}SN^{1/2},
\end{equation*}
for $p^{-1}\textrm{tr}(N)>2p-2v+2$. The next result gives the
distribution of a $MIW$ matrix conditional on a normal matrix.
\begin{thm}\label{th21}
Let $Y$ be an $r\times p$ random matrix that follows a matrix normal
distribution, conditional on $\Sigma$, and $\Sigma$ a $p\times p$
covariance random matrix that follows a modified inverted Wishart
distribution, written $Y|\Sigma\sim N_{r\times p}(m,P,\Sigma)$ and
$\Sigma\sim MIW_p(S,N,v)$ respectively, for some known quantities
$m$, $P$, $S$, $N$, and $v$. Then, the conditional distribution of
$\Sigma$ given $Y$, is
\begin{equation*}
\Sigma|Y\sim MIW_p(S^*,N^*,v),
\end{equation*}
where $N^{*1/2}S^{*}N^{*1/2}=(Y-m)'P^{-1}(Y-m) +N^{1/2}SN^{1/2}$ and
$N^{*}=N+rI_p$.
\end{thm}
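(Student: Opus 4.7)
The plan is to apply Bayes' theorem directly and match the resulting posterior kernel against the template in Lemma 1. From $\textrm{vec}(Y)|\Sigma\sim N_{rp}(\textrm{vec}(m),\Sigma\otimes P)$ together with the identity $\textrm{vec}(Z)'(\Sigma^{-1}\otimes P^{-1})\textrm{vec}(Z)=\textrm{tr}(Z'P^{-1}Z\Sigma^{-1})$, the likelihood, viewed as a function of $\Sigma$, is proportional to $|\Sigma|^{-r/2}\textrm{etr}\{-\tfrac{1}{2}\Sigma^{-1}(Y-m)'P^{-1}(Y-m)\}$. Multiplying this by the $MIW_p(S,N,v)$ prior density from (\ref{eq1}), using the cyclic property of the trace to reorder $N^{1/2}SN^{1/2}\Sigma^{-1}$, and collecting terms in $\Sigma$ yields a posterior kernel
\[
|\Sigma|^{-\{v+\textrm{tr}(N)/(2p)+r/2\}}\,\textrm{etr}\!\left(-\tfrac{1}{2}\Sigma^{-1}\left[(Y-m)'P^{-1}(Y-m)+N^{1/2}SN^{1/2}\right]\right).
\]

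I would then identify this with an $MIW_p(S^{*},N^{*},v)$ kernel. The trace factor immediately forces $N^{*1/2}S^{*}N^{*1/2}=(Y-m)'P^{-1}(Y-m)+N^{1/2}SN^{1/2}$, while matching the determinant exponent requires $\textrm{tr}(N^{*})=\textrm{tr}(N)+rp$. The key observation is the identity $r/2=\textrm{tr}(rI_p)/(2p)$, which shows that the extra $-r/2$ in the determinant exponent can be absorbed entirely into the diagonal-matrix hyperparameter by setting $N^{*}=N+rI_p$, leaving the scalar $v$ unchanged. This is the natural choice because it recovers the standard inverted Wishart update (degrees of freedom increased by $r$) in the reduction $N=nI_p$, $v=p$.

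Finally, I would check admissibility: $N^{*}=N+rI_p$ is diagonal with strictly positive entries, and $(Y-m)'P^{-1}(Y-m)+N^{1/2}SN^{1/2}$ is positive definite as a sum of a positive semidefinite and a positive definite matrix, so $S^{*}:=N^{*-1/2}\{(Y-m)'P^{-1}(Y-m)+N^{1/2}SN^{1/2}\}N^{*-1/2}$ is a valid covariance matrix. Lemma 1 then guarantees that the posterior kernel integrates to a finite constant, so it is indeed the density of $MIW_p(S^{*},N^{*},v)$. The main conceptual obstacle is not the computation but the ambiguity in how to split the additional $-r/2$ between the scalar $v$ and the diagonal $N$; the identity $r/2=\textrm{tr}(rI_p)/(2p)$ resolves this uniquely under the convention $v^{*}=v$.
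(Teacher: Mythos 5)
Your proof is correct and follows essentially the same route as the paper's: form the joint density via Bayes' theorem and match the resulting kernel $|\Sigma|^{-\{v+r/2+\textrm{tr}(N)/(2p)\}}\textrm{etr}\{-\tfrac{1}{2}((Y-m)'P^{-1}(Y-m)+N^{1/2}SN^{1/2})\Sigma^{-1}\}$ against the $MIW$ template of Lemma~1. You actually supply more detail than the paper does (the exponent-matching identity $r/2=\textrm{tr}(rI_p)/(2p)$ and the positive-definiteness check for $S^{*}$), which the paper leaves implicit in the phrase ``which is sufficient for the proof.''
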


\begin{proof}
Form the joint distribution of $Y$ and $\Sigma$ and write
\begin{eqnarray}
p(\Sigma|Y)&\propto &p(Y,\Sigma)=p(Y|\Sigma)p(\Sigma)\nonumber\\
&\propto &|\Sigma|^{-\left\{ v+r/2+\textrm{tr}(N)/(2p) \right\}
}\textrm{etr}\bigg[-\frac{1}{2}
 \{ (Y-f)'Q^{-1}(Y-f)\nonumber\\
&&+N^{1/2} SN^{1/2}\}\Sigma^{-1} \bigg],\label{eq:app:joint}
\end{eqnarray}
which is sufficient for the proof with the definition of $S^{*}$ and
$N^{*}$.
\end{proof}

In the context of Proposition \ref{th21} the joint distribution of
$Y$ and $\Sigma$ is referred to as joint normal modified inverted
Wishart distribution with notation $Y,\Sigma\sim NMIW_{r\times
p,p}(m,P,S,N,v)$, for $m$, $P$, $S$, $N$, and $v$ as defined in
Proposition \ref{th21}. The next result gives the marginal
distribution of $Y$. First we give some background material on the
matrix $t$ distribution.

Let $X$ be an $r\times p$ random matrix. Then, the matrix $t$
distribution is defined by
\begin{equation}\label{eqa6}
p(X)=c|Q+(X-M)'P^{-1}(X-M)|^{-(k+r+p-1)/2},
\end{equation}
with
\begin{displaymath}
c=\frac{\Gamma_p\{(k+r+p-1)/2\}|Q|^{(k+p-1)/2}|P|^{-p/2}}{\pi^{rp/2}\Gamma_p\{(k+p-1)/2\}},
\end{displaymath}
where $M$ is an $r\times p$ matrix, $P$ a $r\times r$ covariance
matrix, $Q$ a $p\times p$ covariance matrix, and $k$ any positive
real number.

\begin{thm}\label{th3}
Let $Y$ be an $r\times p$ random matrix that follows a matrix
normal distribution conditional on $\Sigma$, and $\Sigma$ be a
$p\times p$ covariance random matrix that follows a modified
inverted Wishart distribution, written $Y|\Sigma\sim N_{r\times
p}(f,Q,\Sigma)$, and $\Sigma\sim MIW_p(S,N,v)$ respectively, for
known quantities $f$, $Q$, $S$, $N$, and $v$. Then, the marginal
distribution of $Y$ is
\begin{equation}\label{eq81}
p(Y)=c |N^{1/2}SN^{1/2}+(Y-f)'Q^{-1}(Y-f)|
^{-\left\{2v+\textrm{tr}\left(N\right)/p+d-p-1\right\}/2},
\end{equation}
which by analogy of the $MIW$ distribution, is a modification of the
matrix $t$ distribution and it is written as $MT(f,Q,S,N,v)$.
\end{thm}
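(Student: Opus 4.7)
The plan is to obtain the marginal density of $Y$ by integrating $\Sigma$ out of the joint density $p(Y,\Sigma)=p(Y|\Sigma)p(\Sigma)$, using identity (\ref{eq0}) to evaluate the integral in closed form. Everything needed is already assembled in the proof of Proposition \ref{th21}.

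First, I would write down the matrix-normal density for $Y|\Sigma$ and the $MIW_p(S,N,v)$ density (\ref{eq1}) for $\Sigma$, and multiply them. Exactly as in equation (\ref{eq:app:joint}), the $\Sigma$-dependent part of the joint density is
\begin{equation*}
|\Sigma|^{-\{v+r/2+\textrm{tr}(N)/(2p)\}}\,\textrm{etr}\!\left(-\tfrac{1}{2}\bigl[(Y-f)'Q^{-1}(Y-f)+N^{1/2}SN^{1/2}\bigr]\Sigma^{-1}\right),
\end{equation*}
while the remaining factors (powers of $|Q|$, $|P|$-type constants from the $MIW$ normalizer, and $\pi$'s from the normal density) do not depend on $\Sigma$.

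Second, I would set $R^{*}=(Y-f)'Q^{-1}(Y-f)+N^{1/2}SN^{1/2}$ and $k^{*}=2v+r+\textrm{tr}(N)/p$, so that the $\Sigma$-integrand is exactly of the form $|\Sigma|^{-k^{*}/2}\textrm{etr}(-R^{*}\Sigma^{-1}/2)$ appearing on the left-hand side of (\ref{eq0}). Applying (\ref{eq0}) with these substitutions gives
\begin{equation*}
\int_{\Omega}|\Sigma|^{-k^{*}/2}\textrm{etr}\!\left(-\tfrac{1}{2}R^{*}\Sigma^{-1}\right)d\Sigma \;\propto\; |R^{*}|^{-(k^{*}-p-1)/2},
\end{equation*}
where the omitted constant involves $\Gamma_p$ and powers of $2$ but is independent of $Y$.

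Third, substituting $R^{*}$ and $k^{*}$ back yields
\begin{equation*}
p(Y)\propto\bigl|N^{1/2}SN^{1/2}+(Y-f)'Q^{-1}(Y-f)\bigr|^{-\{2v+\textrm{tr}(N)/p+r-p-1\}/2},
\end{equation*}
which matches (\ref{eq81}) (identifying the $d$ there with the row dimension $r$). Collecting the constants from the two input densities together with the Wishart normalizer gives the overall constant $c$; I would cross-check it against the constant of the matrix-$t$ density (\ref{eqa6}) under the bijection (\ref{tr1}) to confirm (\ref{eq81}) is a bona fide density, rather than recomputing it from scratch.

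There is no real obstacle: the only thing to be careful about is the bookkeeping of the exponent of $|\Sigma|$ — the $r/2$ from the normal density must be added to the $v+\textrm{tr}(N)/(2p)$ coming from the $MIW$ density so that $k^{*}$ has the right value for (\ref{eq0}) to apply. Once the exponents are aligned, the result follows by a single application of the inverted-Wishart integral identity, in direct analogy with the standard derivation of the matrix-$t$ distribution as the normal/inverted-Wishart mixture.
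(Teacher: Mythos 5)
Your proposal is correct and follows exactly the paper's own route: form the joint density as in (\ref{eq:app:joint}), set $R=(Y-f)'Q^{-1}(Y-f)+N^{1/2}SN^{1/2}$ and $k=2v+r+\textrm{tr}(N)/p$, and integrate $\Sigma$ out via the Wishart integral identity (\ref{eq0}). Your observation that the $d$ in the exponent of (\ref{eq81}) should be read as the row dimension $r$ is also consistent with what the paper's own substitution yields.
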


\begin{proof}
The joint distribution of $Y$ and $\Sigma$ is given by equation
(\ref{eq:app:joint}). Hence, the marginal distribution of $Y$ is
\begin{displaymath}
p(Y)=\int_{\Omega}p(Y,\Sigma)\,d\Sigma,
\end{displaymath}
where $\Omega=\{\Sigma\in \re^{p\times p}:\Sigma>0\}$. Set
$R=(Y-f)'Q^{-1}(Y-f)+N^{1/2} SN^{1/2}$ and $k=2v+r+\textrm{tr}(N)/p$
and from equation (\ref{eq0}) we have equation (\ref{eq81}).
\end{proof}

The distribution of Proposition (\ref{th3}) can be derived from the
matrix $t$ distribution (see equation (\ref{eqa6})). The normalizing
constant $c$ of (\ref{eq81}) is obtainable from (\ref{eqa6}) as
\begin{equation*}
c=\frac{\pi^{pr/2}\Gamma_p\{ (k+p-1)/2\} }{\Gamma_p\{ (k+r+p-1)/2\}
}|S|^{(k+p-1)/2}\left(\prod_{j=1}^p
n_j\right)^{(k+p-1)/2}|Q|^{-p/2},
\end{equation*}
where $N=\textrm{diag}(n_1,\ldots,n_p)$ and
$k=2v-2p+\textrm{tr}(N)/p$. Note that if all the diagonal elements
of $N$ are the same (i.e. $n_1=\cdots =n_p=n$) and $v=p$, then the
above distribution reduces to a matrix $t$ distribution with $n$
degrees of freedom.

Finally we give the marginal distribution of $\Sigma$. Consider
the following partition of $\Sigma$, $S$, and $N$
\begin{displaymath}
\Sigma=\left[\begin{array}{cc} \ \Sigma_{11} & \Sigma_{12}
\\ \Sigma_{12}' & \Sigma_{22}
\end{array}\right], \quad S=\left[\begin{array}{cc} \ S_{11} &
S_{12} \\ S_{12}' & S_{22}
\end{array}\right], \quad N=\left[\begin{array}{cc} \ N_1 &
0 \\ 0' & N_2
\end{array}\right],
\end{displaymath}
where $\Sigma_{11}$, $S_{11}$ and $N_{11}$ have dimension $q\times
q$, for some $1\leq q<p$. The next result gives the marginal
distribution of $\Sigma_{11}$.
\begin{thm}\label{th2}
If $\Sigma\sim MIW_p(S,N,v)$, under the above partition of
$\Sigma$ the distribution of $\Sigma_{11}$ is $\Sigma_{11}\sim
MIW_q(S_{11},N_{11},v_1)$, where
$v_1=v-p+q+2^{-1}p^{-1}\textrm{tr}(N)-2^{-1}q^{-1}\textrm{tr}(N_1)$.
\end{thm}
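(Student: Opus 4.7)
The plan is to transfer the question to the ordinary inverted Wishart via the bijection (\ref{tr1}), apply the standard marginal property of that distribution, and then translate back to the $MIW$ parametrization in dimension $q$. Setting $k=2v+\textrm{tr}(N)/p$ and $R=N^{1/2}SN^{1/2}$, the hypothesis becomes $\Sigma\sim IW_p$ with parameters $(k,R)$ in the sense of (\ref{eq00}). Since $N$ is block-diagonal with blocks $N_1$ and $N_2$, so is $N^{1/2}$, and hence the partition of $R$ conforming to that of $\Sigma$ satisfies $R_{11}=N_1^{1/2}S_{11}N_1^{1/2}$. Everything therefore reduces to establishing that, for such a $\Sigma$, the top-left block $\Sigma_{11}$ is inverted Wishart in the same parametrization with parameters $(k_1,R_{11})$ and $k_1=k-2(p-q)$.

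For that marginal step I would reparametrize $(\Sigma_{12},\Sigma_{22})$ via $B=\Sigma_{11}^{-1}\Sigma_{12}$ and the Schur complement $\Sigma_{22\cdot 1}=\Sigma_{22}-\Sigma_{12}'\Sigma_{11}^{-1}\Sigma_{12}$. Routine block-inverse and change-of-variable calculations give $|\Sigma|=|\Sigma_{11}|\,|\Sigma_{22\cdot 1}|$, Jacobian $|\Sigma_{11}|^{p-q}$, and, after completing the square in $B$,
\begin{equation*}
\textrm{tr}(R\Sigma^{-1})=\textrm{tr}(R_{11}\Sigma_{11}^{-1})+\textrm{tr}(R_{22\cdot 1}\Sigma_{22\cdot 1}^{-1})+\textrm{tr}(\tilde B'R_{11}\tilde B\,\Sigma_{22\cdot 1}^{-1}),
\end{equation*}
where $R_{22\cdot 1}=R_{22}-R_{12}'R_{11}^{-1}R_{12}$ and $\tilde B=B-R_{11}^{-1}R_{12}$. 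The joint density in $(\Sigma_{11},B,\Sigma_{22\cdot 1})$ then factorizes cleanly: the $\tilde B$ factor integrates out as a matrix-normal Gaussian integral, and the $\Sigma_{22\cdot 1}$ factor integrates out as a $(p-q)$-dimensional inverted Wishart integral handled by (\ref{eq0}). What is left is a density in $\Sigma_{11}$ proportional to $|\Sigma_{11}|^{-k/2+p-q}\textrm{etr}(-\frac{1}{2}R_{11}\Sigma_{11}^{-1})$, which is precisely $IW_q(k_1,R_{11})$ with $k_1=k-2(p-q)$.

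To finish, I would invert the bijection (\ref{tr1}) in dimension $q$: take $\Sigma_{11}\sim MIW_q(S_{11},N_1,v_1)$ with parameter matrix $N_1^{1/2}S_{11}N_1^{1/2}=R_{11}$, which is automatic from the block-diagonal structure of $N^{1/2}$, and with $v_1$ determined by $2v_1+\textrm{tr}(N_1)/q=k_1=k-2(p-q)$. Substituting $k=2v+\textrm{tr}(N)/p$ and solving yields the claimed $v_1=v-p+q+\textrm{tr}(N)/(2p)-\textrm{tr}(N_1)/(2q)$. I expect the main obstacle to be the middle step, specifically the Schur-complement bookkeeping: keeping the Jacobian $|\Sigma_{11}|^{p-q}$ and the three-way split of $\textrm{tr}(R\Sigma^{-1})$ consistent. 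Once that decomposition is in place, the two successive integrals and the final comparison with (\ref{eq0}) are entirely routine.
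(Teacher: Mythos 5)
Your proposal is correct and follows essentially the same route as the paper's proof: transfer to the ordinary inverted Wishart via the bijection (\ref{tr1}), note that the block-diagonal structure of $N^{1/2}$ gives $R_{11}=N_1^{1/2}S_{11}N_1^{1/2}$, apply the marginalization property of the inverted Wishart, and invert the bijection in dimension $q$ to recover $v_1$. The only difference is that you spell out the marginalization step (Schur complement, Jacobian $|\Sigma_{11}|^{p-q}$, and the two integrals) which the paper simply cites as a known property, and your bookkeeping there, including $k_1=k-2(p-q)$, is accurate.
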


\begin{proof}
The proof suggests the adoption of transformation (\ref{tr1})
together with the partition of $R$ in (\ref{eq00}) as
\begin{displaymath}
R=\left[\begin{array}{cc} R_{11} & R_{12}\\
R_{12}' & R_{22}\end{array}\right].
\end{displaymath}
Using marginalization properties of the inverted Wishart
distribution, upon noticing
\begin{displaymath}
N^{1/2}SN^{1/2}=\left[\begin{array}{cc} \ N_1^{1/2}S_{11}N_1^{1/2}
&
N_1^{1/2}S_{12}N_2^{1/2}\\
N_2^{1/2}S_{12}'N_1^{1/2} &
N_2^{1/2}S_{22}N_2^{1/2}\end{array}\right],
\end{displaymath}
we get $\Sigma_{11}\sim MIW_q(S_{11},N_1,v_1)$, with $v_1$ as
required.
\end{proof}

A similar result can be obtained for $\Sigma_{22}$. Consequently,
if we write $\Sigma=\{\sigma_{ij}\}$ $(1\leq i,j\leq p)$ and
$N=\textrm{diag}(n_1,\ldots,n_p)$, then the diagonal variances
$\sigma_{ii}$ follow modified inverted Wishart distributions,
$\sigma_{ii}\sim MIW_1(s_{ii},n_i,v_i)$, where
$v_i=v-p+1+2^{-1}p^{-1}\textrm{tr}(N)-2^{-1}n_i$. These in fact
are inverted gamma distributions $\sigma_{ii}\sim
IG(v_i+n_i/2-1,n_is_{ii}/2)$. Note that if $n_1=\cdots =n_p=n$ and
$v=p$, then we have that $\sigma_{ii}\sim IG(n/2,ns_{ii}/2)$ (the
inverted gamma distribution used in West and Harrison (1997) when
$p=1$).

We close this section with a brief discussion on an earlier study
proposing the incorporation of several degrees of freedom for
inverted Wishart matrices (Brown {\it et al.}, 1994). This
approach is based on breaking the degrees of freedom on blocks and
requiring for each block the marginal density of the covariance
matrix to follow an inverted Wishart distribution. However, in
that framework the conjugacy between the normal and that
distribution is lost and as a result the proposed estimation
procedure may be slow and probably not suitable for time series
application. Relevant inferential issues of that approach are
discussed in Garthwaite and Al-Awadhi (2001). Our proposal of the
$MIW$ distribution retains the desired conjugacy and it leads to
relevant modifications of the matrix $t$ distribution, which
provides the forecast distribution. Furthermore, the $MIW$ density
leads to fast computationally efficient algorithms, which are
suitable for sequential model monitoring and expert intervention
(Salvador and Gargallo, 2004). Finally, according to Proposition
\ref{th2}, the marginal distributions of $MIW$ matrices are also
$MIW$, which means that several degrees of freedom are included in
the marginal models too, something that is not the case in the
approach of Brown {\it et al.} (1994).

\subsection{Matrix-variate dynamic linear models
revisited}\label{c4s3}

We consider model (\ref{model}), but now we replace the initial
priors (\ref{priors1}) by the priors
\begin{equation}\label{priors}
\Theta_0|\Sigma\sim N_{d\times
p}(m_0,P_0,\Sigma_0)\quad\textrm{and}\quad \Sigma_0 \sim MIW_p
(S_0,N_0,p),
\end{equation}
for some known $m_0$, $P_0$, $S_0$ and $N_0$. Practically we have
replaced the inverted Wishart prior by the $MIW$ and so, for each
$t=1,\ldots,T$, we use $p$ degrees of freedom $n_{1t},\ldots,n_{pt}$
in order to estimate $\Sigma|y^t$, where $y^t$ denotes the
information set, comprising of observed data $y_1,\ldots,y_t$. The
next result provides the posterior and forecast distributions of the
new MV-DLM.

\begin{thm}\label{th4}
One-step forecast and posterior distributions in the model
(\ref{model}) with the initial priors (\ref{priors}), are given,
for each $t$, as follows.
\begin{description}
\item (a) Posterior at $t-1:\qquad \Theta_{t-1},\Sigma|y^{t-1}\sim
NMIW_{d\times p,p} (m_{t-1},P_{t-1},S_{t-1},N_{t-1},p)$,\\
for some $m_{t-1}$, $P_{t-1}$, $S_{t-1}$ and $N_{t-1}$.
\item (b) Prior at $t:\qquad\Theta_t,\Sigma|y^{t-1}\sim
NMIW_{d\times p,p} (a_t,R_t,S_{t-1},N_{t-1},p)$,\\
where $a_t=G_tm_{t-1}$ and $R_t=G_tP_{t-1}G_t'+W_t$.
\item (c) One-step forecast at $t$:$\qquad y_t'|\Sigma,y^{t-1}\sim
N_{r\times p} (f_t',Q_t,\Sigma)$,\\
with marginal:$\qquad y_t'|y^{t-1}\sim
MT_{r\times p} (f_t',Q_t,S_{t-1},N_{t-1},p)$,\\
where $f_t'=F_t'a_t$ and $Q_t=F_t'R_tF_t+V_t$.
\item (d) Posterior at $t:\qquad \Theta_t,\Sigma|y^t\sim
NMIW_{d\times p,p} (m_t,P_t,S_t,N_t,p)$,\\
with
\begin{gather*}
m_t=a_t+A_te_t',\quad
P_t=R_t-A_tQ_tA_t',\\
N_t=N_{t-1}+rI_p, \quad N_t^{1/2}S_tN_t^{1/2}=N_{t-1}^{1/2}S_{t-1}
N_{t-1}^{1/2}+e_tQ_t^{-1}e_t',\\
A_t=R_tF_tQ_t^{-1}, \quad \mbox{and} \quad e_t=y_t-f_t.
\end{gather*}
\end{description}
\end{thm}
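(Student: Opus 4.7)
The plan is induction on $t$ with part (a) as the inductive hypothesis; the base case is handed to us by the initial prior (\ref{priors}), which already has $NMIW_{d\times p,p}(m_0,P_0,S_0,N_0,p)$ form. Granted (a), I would derive (b), (c), and (d) in turn so that (d) reinstates the hypothesis at time $t$.

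For (b), I would condition on $\Sigma$ and push $\Theta_{t-1}$ through the evolution $\Theta_t=G_t\Theta_{t-1}+\omega_t$: since $\Theta_{t-1}|\Sigma,y^{t-1}$ and $\omega_t|\Sigma$ are independent matrix normals sharing the right covariance $\Sigma$, affine closure of the matrix normal family (preserving that shared right covariance) delivers $\Theta_t|\Sigma,y^{t-1}\sim N_{d\times p}(a_t,R_t,\Sigma)$ with $a_t$, $R_t$ as stated. The evolution acts only on $\Theta$, so the $MIW$ margin of $\Sigma$ is untouched. Part (c) then runs on exactly the same lines: substituting into $y_t'=F_t'\Theta_t+\epsilon_t'$ with $\epsilon_t|\Sigma$ independent of $\Theta_t$ yields $y_t'|\Sigma,y^{t-1}\sim N_{r\times p}(f_t',Q_t,\Sigma)$, and integrating out $\Sigma$ against its $MIW$ margin is exactly the hypothesis of Proposition \ref{th3}, which returns the claimed $MT$ forecast.

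Part (d) is where the actual work is. I would write $p(\Theta_t,\Sigma|y^t)\propto p(y_t|\Theta_t,\Sigma)\,p(\Theta_t|\Sigma,y^{t-1})\,p(\Sigma|y^{t-1})$ and factor it as $p(\Theta_t|\Sigma,y^t)\,p(\Sigma|y^t)$, handling the two factors separately. The first is a standard matrix Kalman update: completing the square in $\Theta_t$ at fixed $\Sigma$ (the shared right covariance factors through and cancels) returns the usual $m_t=a_t+A_te_t'$, $P_t=R_t-A_tQ_tA_t'$ with $A_t=R_tF_tQ_t^{-1}$. For the second factor, I would apply Proposition \ref{th21} to the pair $y_t'|\Sigma,y^{t-1}\sim N_{r\times p}(f_t',Q_t,\Sigma)$ and $\Sigma|y^{t-1}\sim MIW_p(S_{t-1},N_{t-1},p)$ obtained from (b)--(c); this immediately gives $N_t=N_{t-1}+rI_p$ and $N_t^{1/2}S_tN_t^{1/2}=N_{t-1}^{1/2}S_{t-1}N_{t-1}^{1/2}+(y_t'-f_t')'Q_t^{-1}(y_t'-f_t')$, and transposing $e_t=y_t-f_t$ rewrites the last term as $e_tQ_t^{-1}e_t'$. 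The main obstacle is purely bookkeeping: I need to keep transposes consistent between the $r\times p$ convention of Proposition \ref{th21} and the $p\times r$ innovation $e_t$, and to verify that the $\Theta_t$ completion-of-squares truly leaves the residual $\Sigma$-quadratic in exactly the form consumed by Proposition \ref{th21}, so that the Kalman step and the $\Sigma$ update decouple and can be read off in sequence.
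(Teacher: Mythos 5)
Your proposal is correct and is essentially the paper's argument written out in full: the paper simply states that the result ``follows immediately from Propositions \ref{th21} and \ref{th3},'' meaning exactly the standard West--Harrison induction you describe (evolution step by matrix-normal affine closure, forecast step via Proposition \ref{th3}, and posterior step by factoring the joint into the Kalman update for $\Theta_t$ and the Proposition \ref{th21} conjugate update for $\Sigma$). Your transpose bookkeeping also checks out, since $(y_t'-f_t')'Q_t^{-1}(y_t'-f_t')=e_tQ_t^{-1}e_t'$ with $e_t=y_t-f_t$ of dimension $p\times r$.
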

The proof of this result follows immediately from Propositions
\ref{th21} and \ref{th3}. For $t=1$, (a) coincides with the priors
(\ref{priors}). From Proposition \ref{th3}, the marginal posterior
of $\Theta_t|y^t$ is $\Theta_t|y^t\sim MT_{d\times
p}(m_t,P_t,S_t,N_t,p)$. Thus the above proposition gives a
recursive algorithm for the estimation and forecasting of the
system for all $t=1,\ldots,T$.

Proposition \ref{th4} gives a generalization of the updating
recursions of matrix-variate dynamic models (West and Harrison,
1997, Chapter 16). The main difference of the two algorithms is that
the scalar degrees of freedom $n_t$ of the standard recursions are
replaced by $N_t$ in the above proposition and that the inverted
Wishart distribution is replaced by the modified inverted Wishart
distribution (in order to account for the matrix of degrees of
freedom). As a result the classical Bayesian updating of West and
Harrison (1997) is obtained as a special case of the distributional
results of Proposition \ref{th4}, by setting
$N_t=n_tI_p=\textrm{diag}(n_t,\ldots,n_t)$ $(t=0,1,\ldots,T)$, where
$n_t$ represent the scalar degrees of freedom of the inverted
Wishart distribution of $\Sigma_t|y^t$ and $n_0$ is the initial
degrees of freedom.

\section{Missing observations}\label{c4s4}

In this section we consider missing observations at random. Our
approach is based on excluding any missing values of the calculation
of the updating equations (state and forecast distributions) thus
excluding the unknown influence of these unobserved variables. This
approach is explained for univariate dynamic models in West and
Harrison (1997, Chapters 4,10).

The univariate dynamic linear model with unknown observational
variance is obtained from model (\ref{model}) for $p=r=1$. In this
case the posterior recursions of $m_t$, $P_t$ and $S_t$ of West
and Harrison (1997, Chapter 4) follow from Proposition \ref{th4}
as a special case. Now suppose that at time $t$ the scalar
observation $y_t$ is missing so that $y^t=y^{t-1}$. It is then
obvious that the posterior distribution of $\Theta_t$ equals its
prior distribution (since no information comes in to the system at
time $t$). Then we have $m_t=a_t$, $P_t=R_t$, $S_t=S_{t-1}$ and
$N_t=n_t=n_{t-1}=N_{t-1}$. To incorporate this into the updating
equations of the posterior means and variances, we can write
$m_t=a_t-A_te_tu_t$, $P_t=R_t-A_tA_t'Q_tu_t$,
$n_tS_t=n_{t-1}S_{t-1}+e_t^2u_t/Q_t$ and $n_t=n_{t-1}+u_t$, where
$u_t$ is zero, if $y_t$ is missing and $u_t=1$, if $y_t$ is
observed. So when $p=1$ the inclusion of $u_t$ in the posterior
recursions leads to identical analysis as in West and Harrison
(1997) and in references therein. The introduction of $u_t$ in the
recursions automates the posterior/prior updating in the presence
of missing values and it motivates the case for $p,r\geq 1$.

Moving to the multivariate case, first we consider model
(\ref{model}) as defined in the previous section with $r=1$.
Assume that we observe all the $p\times 1$ vectors $y_{i}$,
$i=1,\ldots ,t-1$. At time $t$ some observations are missing
(sub-vectors of $y_t$, or the entire $y_t$). To distinguish the
former from the latter case we have the following definition.
\begin{dfn}\label{df41}
A partial missing observation vector is said to be any strictly
sub-vector of the observation vector that is missing. If the entire
observation vector is missing it is referred to as full missing
observation vector.
\end{dfn}
Considering the MV-DLM (\ref{model}), it is clear that in the case
of a full missing vector we have
\begin{equation}\label{eq20}
\Theta_t,\Sigma|y^t\sim NMIW_{d\times p,p} (m_t,P_t,S_t, N_t,p),
\end{equation}
where $m_t=a_t$, $P_t=R_t$, $S_t=S_{t-1}$, $N_t=N_{t-1}$, since no
information comes in at time $t$. This equation relates to the
standard posterior distribution of West and Harrison (1997) by
setting $N_t=\textrm{diag}(n_t,\ldots,n_t)$, for a scalar $n_t>0$
and evidently reducing the $MIW$ distribution by a $IW$
distribution. If one starts with a prior
$N_0=\textrm{diag}(n_0,\ldots,n_0)$, and assuming that at some
time $t$, there is a full missing vector $y_t$, then it is clear
that the posterior (\ref{eq20}) equals to the posterior of
$\Theta_t,\Sigma|y^t$ using the standard recursions (West and
Harrison, 1997). Any differences between the two algorithms is
highlighted only by observing partial missing vectors and this has
been the motivation of the new algorithm.

Define a $p\times p$ diagonal matrix
$U_t=\mbox{diag}(i_{1t},\ldots ,i_{pt})$ with
\begin{displaymath}
i_{jt}=\left\{ \begin{array}{cc} 1 & \textrm{if $y_{jt}$ is
observed},\\ 0 & \textrm{if $y_{jt}$ is missing,}
\end{array}\right.
\end{displaymath}
for all $1\le j\le p$, where $y_t=[y_{1t}~\cdots ~y_{pt}]'$.

Then, the posterior distribution (\ref{eq20}) still applies with
recurrences
\begin{gather}
m_t=a_t+A_te_t'U_t\label{eq21}\\
P_t=R_t-A_tA_t'Q_tu_t\label{eq22}\\
N_t=N_{t-1}+U_t\label{eq23}\\
N_t^{1/2}S_tN_t^{1/2}=N_{t-1}^{1/2}S_{t-1}
N_{t-1}^{1/2}+U_te_tQ_t^{-1}e_t'U_t,\label{eq24}
\end{gather}
where $u_t=\mbox{tr}(U_t)/p$. Some explanation for the above
formulae are in order.

First note that if no missing observation occurs $U_t=I_p$, $u_t=1$
and we have the standard recurrences as in Proposition \ref{th4}. On
the other extreme (full missing vector), $U_t=0$, $u_t=0$ and we
have equation (\ref{eq20}). Consider now the case of partial missing
observations. Equation (\ref{eq23}) is the natural extension of the
single degrees of freedom updating, see West and Harrison (1997,
Chapter 16). For equation (\ref{eq21}) note that the zero's of the
main diagonal of $U_t$ convey the idea that the corresponding to the
missing values elements of $m_t$ remain unchanged and equal to
$a_t$. For example, consider the case of $p=2$, $d=2$ and assume
that you observe $y_{1t}$, but $y_{2t}$ is missing. Then
\begin{displaymath}
m_t=a_t+\left[\begin{array}{cc} \ A_{1t}(y_{1t}-f_{1t}) & 0\\
A_{2t}(y_{1t}-f_{1t}) & 0\end{array}\right],
\end{displaymath}
where $A_t=[A_{1t}~A_{2t}]'$. The zero's on the right hand side
reveal that the second column of $m_t$ is the same as the second
column of $a_t$. Similar comments apply for equations (\ref{eq22})
and (\ref{eq24}).

Considering the case of $r\geq 2$, we define $U_{kt}$ to be the
diagonal matrix $U_{kt}=\mbox{diag}(i_{1k,t},\ldots,i_{pk,t})$ with
\begin{displaymath}
i_{jk,t}=\left\{ \begin{array}{cc} 1 & \textrm{if $y_{jk,t}$ is
observed},\\ 0 & \textrm{if $y_{jk,t}$ is missing,}
\end{array}\right.
\end{displaymath}
where $y_t=\{y_{jk,t}\}$, $(j=1,\ldots,p;k=1,\ldots,r)$.

Then, the moments of equation (\ref{eq20}) can be updated via
\begin{gather*}
m_t=a_t+A_te_t'\prod_{k=1}^rU_{kt}, \quad P_t=R_t-A_tQ_tA_t'u_t,
\quad
N_t=N_{t-1}+\sum_{k=1}^rU_{kt}\\
N_t^{1/2}S_tN_t^{1/2}=N_{t-1}^{1/2}S_{t-1}
N_{t-1}^{1/2}+\left(\prod_{k=1}^rU_{kt}\right)e_tQ_t^{-1}
e_t'\left(\prod_{k=1}^rU_{kt}\right),
\end{gather*}
where $u_t=\mbox{tr}(\prod_{k=1}^rU_{kt})/p$. Similar comments as in
the case of $r=1$ apply. Definition \ref{df41} is trivially extended
in the case when observations form a matrix ($r\geq 2$).

We illustrate the proposed methodology by considering simulated
data, consisting of 100 bivariate time series
$y_1,\ldots,y_{100}$, generated from a local level model
$y_t=[y_{1t}~y_{2t}]'=\psi_t+\epsilon_t$ and
$\psi_t=\psi_{t-1}+\zeta_t$, where $\psi_0$, $\epsilon_t$ and
$\zeta_t$ are all simulated from bivariate normal distributions.
The correlation of $\epsilon_{1t}$ and $\epsilon_{2t}$ is set to
$0.8$, while the elements of $\zeta_t$ are uncorrelated. This
model is a special case of model (\ref{model}) with
$\Theta_t'F_t=\psi_t$ and $G_t=I_2$. Figure \ref{fig1} (solid
line) shows the simulated data; the gaps in this figure indicate
missing values at times $t=24,43,60,75,86$. At times $t=24,43,86$,
$y_{t2}$ is only missing (partial missing vectors), at time
$t=75$, $y_{t1}$ is only missing (partial missing vector) and at
time $t=60$, both $y_{t1},y_{t2}$ are missing (full missing
vector). For this data set, we compare the performance of
recursions (\ref{eq21})-(\ref{eq24}) with that of the classic or
old recursions of West and Harrison (1997), which assume that when
there is at least one missing value we set $U_t=0$ and $u_t=0$.
For example using the old recursions, for $t=24$ one would set
$U_{24}=0$ and $u_{24}=0$, losing the ``partial'' information of
$y_{24,1}=-3.739$, which is observed. On the other hand, the new
recursions would suggest for $t=24$ to set
$$
U_{24}=\left[\begin{array}{cc} 1 & 0 \\ 0 & 0 \end{array}\right]
\quad \textrm{and} \quad u_{24}=1/2.
$$

\begin{figure}[t]
 \epsfig{file=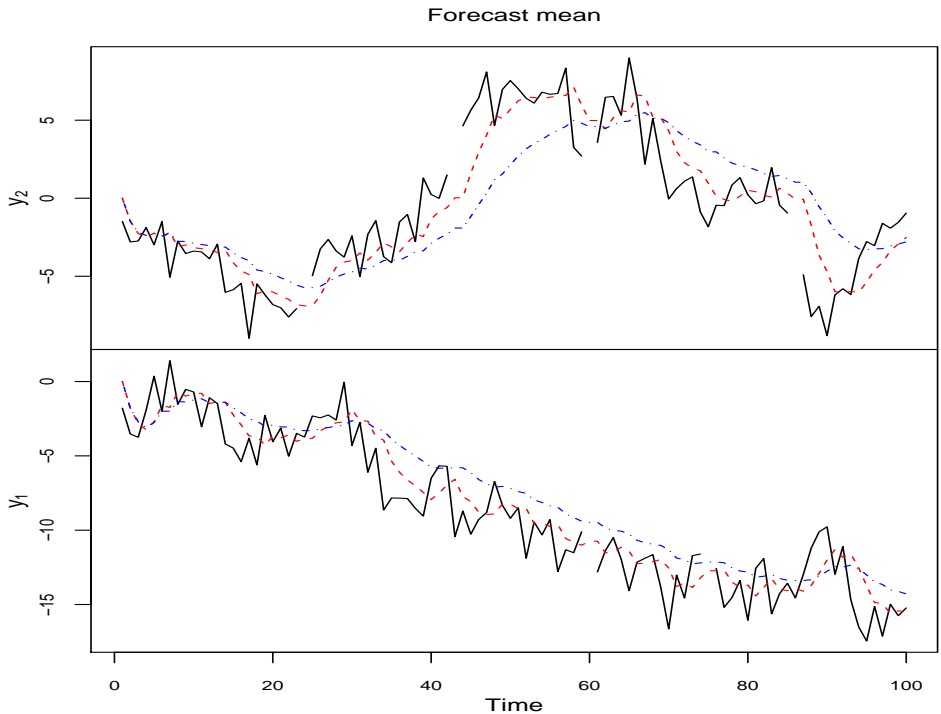, height=10cm, width=15cm}
 \caption{Simulated bivariate time series (solid line) with one-step forecasts from (a)
 the standard DLM recursions (dotted/dashed line) and (b) the new DLM recursions (dashed line). The gaps
 indicate missing values.}\label{fig1}
\end{figure}

Figure \ref{fig1} shows the one-step forecast mean of $\{y_t\}$
using the new recursions (dashed line) and the old recursions
(dotted/dashed line). We observe that the new method produces a
clear improvement in the forecasts as the old recursions provide
poor forecasts, especially in the low panel of Figure \ref{fig1}
(for $\{y_{1t}\}$). What is really happening in this case is that,
under the old recursions, the missing values of $y_{2t}$ affect
the recursions for $y_{1t}$, since the observed information at
$y_{1t}$ is wrongly ``masked'' or ``ignored'' for the points of
time when $y_{2t}$ is missing. On the other hand, the new
recursions use the explicit information from each sub-vector of
$y_t$ and thus the new recursions result in a notably more
accurate forecast performance. This is backed by the mean square
standardized forecast error vector, which for the new recursions
is $[1.300~1.825]'$, while for the old recursions is
$[1.545~2.182]'$. Under the old recursions we can not obtain an
estimate of the covariance between an observed $y_{1t}$ and a
missing $y_{2t}$. However, this is indeed obtained under the
proposed new recursions and so the respective correlations at
points of time where there are gaps are $0.633$ (at $t=24$),
$0.779$ (at $t=43$), $0.812$ (at $t=75$) and $0.809$ (at $t=86$);
the mean of these correlations is $0.792$, which is close to the
real $0.8$ under the simulation experiment.

\section*{Acknowledgements}

I am grateful to Jeff Harrison for useful discussions on the topic
of missing data in time series. I would like to thank a referee for
helpful comments.


\begin{thebibliography}{10}





\bibitem{Brown1994}
Brown, P.J., Le, N.D. and Zidek, J.V. (1994) Inference for a
covariance matrix. In {\it Aspects of Uncertainty} (eds P.R.
Freeman, A.F.M. Smith). Chichester: Wiley.





\bibitem{Dawid81}
Dawid, A.P. (1981), Some matrix-variate distribution theory:
notational considerations and a Bayesian application, {\it
Biometrika}, {\bf 68}, 265-274.

\bibitem{Dempster}
Dempster, A.P., Laird, N.M. and Rubin, D.B. (1977) Maximum
likelihood from incomplete data via the EM algorithm (with
discussion). {\it Journal of the Royal Statistical Society Series
B}, {\bf 39}, 1-38.

\bibitem{Garthwaite}
Garthwaite, P.H. and Al-Awadhi, S.A. (2001) Non-conjugate prior
distribution assessment for multivariate normal sampling. {\it
Journal of the Royal Statistical Society Series B}, {\bf{63}},
95-110.

\bibitem{Gomez}
G\'{o}mez, V. and Marvall, D. (1994) Estimation, prediction, and
interpolation for nonstationary series with the Kalman filter. {\it
Journal of the American Statistical Association}, {\bf 89}, 611-624.

\bibitem{Harvey}
Harvey, A.C. and Pierse, R.G. (1984) Estimating missing observations
in economic time series. {\it Journal of the American Statistical
Association}, {\bf 79}, 125-131.

\bibitem{Jones}
Jones, R.H. (1980) Maximum likelihood fitting of ARMA models to time
series with missing observations. {\it Technometrics}, {\bf 22},
125-131.

\bibitem{Kohn}
Kohn, R. and Ansley, C.F. (1986) Estimation, prediction, and
interpolation for ARIMA models with missing observations. {\it
Journal of the American Statistical Association}, {\bf 81}, 751-761.

\bibitem{Ljung82}
Ljung, G.M. (1982) The likelihood function of a stationary Gaussian
autoregressive-moving-average process with missing observations.
{\it Biometrika}, {\bf 69}, 265-268.

\bibitem{Ljung93}
Ljung, G.M. (1993) On outlier detection in time series. {\it Journal
of the Royal Statistical Society B}, {\bf 55}, 559-567.

\bibitem{Luceno94}
Luce{\~ n}o, A. (1994) A fast algorithm for the exact likelihood of
stationary and partially nonstationary vector autoregressive moving
average processes. {\it Biometrika}, {\bf 81}, 555-565.

\bibitem{Luceno97}
Luce{\~ n}o, A. (1997) Estimation of missing values in possibly
partially nonstationary vector time series. {\it Biometrika}, {\bf
84}, 495-499.

\bibitem{Salvador1}
Salvador, M. and Gargallo, P. (2004). Automatic monitoring and
intervention in multivariate dynamic linear models. {\it
Computational Statistics and Data Analysis}, {\bf 47}, 401-431.

\bibitem{Shumway82}
Shumway, R.H. and Stoffer, D.S. (1982) An approach to time series
smoothing and forecasting using the EM algorithm. {\it Journal of
Time Series Analysis}, {\bf 3}, 253-264.

\bibitem{West97}
West, M. and Harrison, P.J. (1997) {\it Bayesian Forecasting and
Dynamic Models}, Springer-Verlag (second edition), New York.

\bibitem{Wincek}
Wincek, M.A. and Reinsel, G.C. (1986) An exact maximum likelihood
estimation procedure for regression-ARMA time series models with
possibly nonconsecutive data. {\it Journal of the Royal Statistical
Society B}, {\bf 48}, 303-313.

\end{thebibliography}
\end{document}